\newcommand{\resolved}[1]{}
\newcommand{\eat}[1]{}
\newcommand{\schb}[1]{\textsf{\small{#1}}} 
\def\q{{t}}                     
\newcommand{\eop}{\hspace*{\fill}\mbox{$\Box$}}     
\newcounter{example}
\renewcommand{\theexample}{\arabic{example}}
\newcounter{definition}
\renewcommand{\thedefinition}{\arabic{definition}}
\newcommand{\mypara}[1]{\noindent\textbf{#1.}}
\newcommand{\map}{\mathcal{M}}
\newcommand{\cand}{\mathcal{C}}
\newcommand{\kc}{K_{\mathcal{C}}}
\newcommand{\expf}{\mathsf{explains_{full}}}
\newcommand{\errf}{\mathsf{error_{full}}}
\newcommand{\creates}{\mathsf{creates}}
\newcommand{\covers}{\mathsf{covers}}
\newcommand{\gold}{\mathcal{M}_G}
\newcommand{\goldmap}{$\mathcal{M}_G$}
\newcommand{\errmap}{\candmap$-$\goldmap} 
\newcommand{\candmap}{$\mathcal{C}$}
\newcommand{\jgold}{K_G}
\newcommand{\tgd}{st~tgd}
\newcommand{\pcorr}{\pi_{\mathit{Corresp}}}
\newcommand{\perr}{\pi_{\mathit{Errors}}}
\newcommand{\pune}{\pi_{\mathit{Unexplained}}}
\begin{document}
%
\title{A Collective, Probabilistic Approach \\to Schema Mapping: Appendix}


\author{\IEEEauthorblockN{Angelika Kimmig}
\IEEEauthorblockA{KU Leuven\\
angelika.kimmig@cs.kuleuven.be}
\and
\IEEEauthorblockN{Alex Memory}
\IEEEauthorblockA{University of Maryland\\
memory@cs.umd.edu}
\and
\IEEEauthorblockN{Ren\'ee  J. Miller}
\IEEEauthorblockA{University of Toronto\\
  miller@cs.toronto.edu}
\and
\IEEEauthorblockN{Lise Getoor}
\IEEEauthorblockA{UC Santa Cruz\\
getoor@ucsc.edu}}


%


\maketitle



%
\IEEEpeerreviewmaketitle

In this appendix we provide additional supplementary material to ``A Collective, Probabilistic Approach to Schema Mapping''~\cite{thispaper}.  We include an additional extended example, supplementary experiment details, and proof for the complexity result stated in the main paper.

\section{Example of Selection over ST TGDs}\label{sec:objective-example}
We extend the running example from the main paper to illustrate objective Eq.~(9) of \cite{thispaper}.  We use a reduced candidate set $\cand'=\{\theta_1,\theta_3\}$ (Figure~1(d) in \cite{thispaper}) and the data in Figure~1(b)-(c) in \cite{thispaper}, but omit the \schb{leader} relation.
A universal solution $K_{\theta_1}$ of $I$ contains
the \schb{task} tuples \schb{(BigData, Bob, Null$_1$)} and \schb{(ML, Alice, Null$_2$)}, 
while a $K_{\theta_3}$ contains
the \schb{task} tuples \schb{(BigData, Bob, Null$_3$)} and \schb{(ML, Alice, Null$_4$)} and the \schb{org} tuples \schb{(Null$_3$, IBM)} and \schb{(Null$_4$, SAP)}.

For $\theta_1$, $\creates$ is $1$ for tuple \schb{task(BigData, Bob, Null$_1$)}, and $0$ for all other tuples, and $\covers$ is $\sfrac{2}{3}$ for \schb{task(ML, Alice, 111)} and $0$ otherwise. This is because \schb{task(ML, Alice, Null$_2$)}  
partially explains the latter via a homomorphism mapping \schb{Null$_2$} to $111$. Similarly, for $\theta_3$, $\creates$ is $1$ for \schb{task(BigData, Bob, Null$_3$)} and \schb{org(Null$_3$,IBM)}, but $0$ for \schb{task(ML, Alice, Null$_4$)} and \schb{org(Null$_4$,SAP)}, which 
partially explain \schb{task(ML, Alice, 111)} and \schb{org(111, SAP)} to degree $\sfrac{3}{3}$ and $\sfrac{2}{2}$ respectively, via a homomorphism mapping \schb{Null$_4$} to $111$, with corresponding values for $\covers$. 
The different subsets of candidate \tgd{}s thus obtain the following values for the individual parts and the total of objective function~Eq.~(9) of \cite{thispaper}.

\begin{center}
 \begin{tabular}{ccccc}
  \hline
  $\map$ & $\sum 1-\mathrm{explains}$ & $\sum \mathrm{error}$ & size & Eq.~(9) of \cite{thispaper} \\ \hline
   $\{ \}$ & 4 & 0 & 0 & 4\\
  $\{\theta_1\}$ & $3\sfrac{1}{3}$ & 1 & 3 & $7\sfrac{1}{3}$ \\
  $\{\theta_{3}\}$ & 2 & 2 & 4 & 8 \\
  $\{\theta_{1},\theta_3\}$ & 2 & 3 & 7 & 12 \\ \hline
  \end{tabular}
\end{center}
As the data example is small compared to the mappings, the minimal value for the objective is that of the empty mapping, but we also see that $\{\theta_1\}$ is preferred over $\{\theta_{3}\}$, which in turn is preferred over $\{\theta_{1},\theta_3\}$. The reason is that while $\theta_3$ covers more tuples than $\theta_1$, it also produces more errors and is larger. The fact that the empty mapping has a better objective value is an important guard against overfitting on too little data; this is easily overcome by slightly larger data instances.
If we add at least five more projects \schb{X} of the same kind as the \schb{ML} one, i.e., pairs of tuples \schb{proj(X,N,1)} and \schb{task(X,Alice,111)}, the preferred mapping is $\{\theta_{3}\}$, as the empty mapping cannot explain the new target tuples, $\theta_1$ explains each to degree $\sfrac{2}{3}$, and  $\theta_3$ fully explains them (while no mapping introduces additional errors).

\section{Scenario generation}
We provide additional details of the scenario generation process discussed in Section VI-A of \cite{thispaper}. 

\mypara{iBench}\label{app:param}
We used seven iBench primitives~\cite{alexe:pvldb08,AGCM15}:
CP copies a source relation to the target, changing its name.  ADD copies a source relation and adds attributes; DL does the same, but removes attributes instead; and ADL adds and removes attributes to the same relation.  The number that are added or removed are controlled by range parameters, which we set to (2,4).  ME copies two relations, after joining them, to form a target relation. VP copies a source relation to form two, joined, target relations. VNM is the same as VP but introduces an additional target relation to form a N-to-M relationship between the other target relations.

\mypara{Modifying the metadata evidence through random correspondences} 
If $\pcorr > 0$ (cf. Table~I of~\cite{thispaper}), we introduce additional correspondences as follows. 
We randomly select $\pcorr$ percent of the target relations.  For every selected target relation~$T$, we randomly select a source relation~$S$ from those of the iBench primitive invocations not involving~$T$ (so Clio~\cite{FHHMPV09} can generate $\gold$ as part of $\cand$). For each attribute of~$T$, we introduce a correspondence to a randomly selected attribute of~$S$.

\mypara{Modifying the data instance} 
As certain errors and certain unexplained tuples can be removed prior to optimization (cf.~Section III-C of \cite{thispaper}), we restrict data instance modifications to non-certain errors and non-certain unexplained tuples (with respect to $\gold$). 
Note that in our scenarios, $\gold\subseteq\cand$, and thus $\jgold \subseteq \kc$. So each tuple in $\kc$ is either generated by both \goldmap\ and \errmap, only by \goldmap\ (i.e., a non-certain error tuple if deleted from $J$), or only by \errmap\ (i.e., a non-certain unexplained tuple if added to $J$). As tuples in $\kc$ may have nulls, we take into account homomorphisms when determining which of these cases applies to a given tuple. 
We randomly select $\pune$\% of the  potential non-certain unexplained tuples, which we add to~$J$, and $\perr$\% of the potential non-certain error tuples, which we delete from~$J$.


\section{Mapping selection is NP-hard}

We provide a proof for the complexity result stated in Section III-C of the main paper.

\begin{theorem}
The mapping selection problem for full st tgds as defined in~Eq.~(4) of \cite{thispaper} is NP-hard.
\end{theorem}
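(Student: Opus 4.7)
The plan is to reduce from the decision version of Set Cover: given $(U=\{u_1,\dots,u_n\},\mathcal{F}=\{S_1,\dots,S_m\}, k)$, decide whether $U$ can be covered by at most $k$ sets in $\mathcal{F}$. I construct in polynomial time a mapping selection instance whose objective, minimized over subsets of a candidate set $\cand$, falls below a chosen threshold iff such a cover exists.

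For the construction, use a source schema with a single ternary relation $R(x,y,z)$ and a target schema with a single binary relation $T(x,z)$. Introduce a distinct constant $c_i$ for each $S_i$ and a padding parameter $N$ (to be fixed). Populate the source instance $I$ with $R(u_j, c_i, l)$ for every triple such that $u_j\in S_i$ and $l\in\{1,\dots,N\}$; populate the target instance $J$ with $T(u_j, l)$ for every $u_j\in U$ and every $l\in\{1,\dots,N\}$. Take $\cand=\{\theta_1,\dots,\theta_m\}$ with $\theta_i: R(x,c_i,z)\to T(x,z)$. Every $\theta_i$ is full (no existentials), has the same constant size $s$, and its chase outputs exactly $\{T(u_j,l): u_j\in S_i,\,l\in\{1,\dots,N\}\}$, all of which lies in $J$.

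For correctness, two observations pin down the objective of Eq.~(4) of \cite{thispaper} on this instance: (i) every chase output lies in $J$, so the error term vanishes for every $\map\subseteq\cand$; and (ii) since the tgds are full, the chase produces only ground target tuples, making $\mathrm{explains}$ a $\{0,1\}$ indicator per target tuple. If $\map$ leaves exactly $u$ elements of $U$ uncovered, the objective equals $u\cdot N + s\cdot |\map|$. Fixing $N > s\cdot m$ makes a single uncovered element more expensive than selecting every candidate, so any $\map$ achieving objective $\le s\cdot k$ must cover $U$ with $|\map|\le k$; conversely, a cover of size $\le k$ yields a $\map$ with objective exactly $s\cdot k$. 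Hence the Set Cover instance is a yes-instance iff the mapping selection instance admits objective $\le s\cdot k$.

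The main obstacle is making the three-way trade-off of Eq.~(4) collapse onto the one-dimensional size-minimization driving Set Cover. Two design choices handle this: guarding the body of $\theta_i$ with the constant $c_i$ keeps the error term identically zero (no candidate can produce a spurious target tuple), and padding each element with $N=\Theta(s\cdot m)$ replicas makes any saving in tgd size strictly dominated by the cost of leaving any target tuple unexplained. Both steps preserve polynomiality in $n,m,k$, so the reduction establishes NP-hardness.
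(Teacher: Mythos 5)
Your reduction is correct and is essentially the paper's own proof: both reduce from Set Cover with one full candidate tgd per set, replicate each element across a padding domain large enough that leaving any target tuple unexplained dominates the total size of all candidates, arrange the instance so the error term is identically zero, and read off the cover from the selected tgds via a threshold on the objective. The only divergence is cosmetic --- you encode set membership with a constant guard $c_i$ in a single ternary source relation, whereas the paper uses one binary source relation per set, which sidesteps any question of whether the tgd language admits constants in rule bodies (and note your ``objective exactly $s\cdot k$'' for a cover of size at most $k$ should read ``at most $s\cdot k$'', which is what the equivalence actually needs).
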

\begin{proof}
We use a reduction from SET COVER, which is well known to be NP-complete, and is defined as follows:

\noindent 
$\mathbf{Given}$ a finite set $U$, a finite collection $R=\{R_i~|~R_i\subseteq U, 1\leq i\leq k\}$ and a natural number $n\leq k$, is there a set $R'\subseteq R$ consisting of at most $n$ sets $R_i$ such that $\bigcup_{R_i\in R'}R_i = U$?

We first consider the decision variant of mapping selection, which is defined as follows:

\noindent 
$\mathbf{Given}$ schemas \textbf{S}, \textbf{T}, a data example $(I,J)$, a set $\cand$ of candidate \emph{full} \tgd{}s, and a natural number $m$, is there a selection $\map \subseteq \cand$ with $F(\map)\leq m$?

\noindent 
where $F(\map) $ is the function minimized in Eq.~(4) of \cite{thispaper}, i.e.,  
\begin{align}
F(\map) =&\sum_{\q\in J} [1-\expf(\map,\q)] \nonumber\\
&+ \sum_{\q\in K_\cand-J}[\errf(\map,\q)] + \mathsf{size_m}(\map)
\end{align}

We construct a mapping selection decision instance from a SET COVER instance as follows. 
We set $m = 2n$, introduce an auxiliary domain 
 $D=\{1,\ldots,m+1\}$, and define
\begin{align*}
\mathbf{S} &= \{R_i/2~|~R_i\in R\}\\
\mathbf{T} &= \{U/2\}\\
\cand &= \{R_i(X,Y) \rightarrow U(X,Y)~|~R_i\in R\}\\
J &= \{U(x,y)~|~(x,y)\in U\times D\}\\
I &= \bigcup_{R_i\in R}\{R_i(x,y)~|~(x,y)\in R_i\times D\}
\end{align*}
It is easily verified that this construction is polynomial in the size of the SET COVER instance. 
We next show that the answers to SET COVER and the constructed mapping selection problem coincide.

For each $R_i$, the candidate st tgd $\theta_i = R_i(X,Y) \rightarrow U(X,Y)$ has size two, makes no errors (as $R_i\subseteq U$), and for each $x\in R_i$ explains the tuples $U(x,1),\ldots,U(x,m+1)$. We thus have
\begin{align}
F(\map) &=\sum_{\q\in J} [1-\expf(\map,\q)] + 2\cdot |\map|\\
&=(m+1)\cdot \left(|U|-|\bigcup_{\theta_i\in \map}R_i|\right) + 2\cdot |\map|
\end{align}
A mapping $\map\subseteq\cand$ with $F(\map)\leq m = 2n$ thus exists if and only if $|\bigcup_{\theta_i\in \map}R_i|=|U|$ and $|\map|\leq n$, which is exactly the case where $\map$ encodes a covering selection with at most $n$ sets. Furthermore, if such mappings exist, the optimal mapping according to~Eq.~(4) of \cite{thispaper} is one of them, and a polynomial time solution for mapping selection with full st tgds can thus be used to find a candidate solution that can be verified or rejected in polynomial time to answer SET COVER. 
\end{proof}

We note that the mapping selection problem for arbitrary st tgds as defined in~Eq.~(9) of \cite{thispaper} coincides with the one in~Eq.~(4) of \cite{thispaper} if all candidates are full, and thus is NP-hard as well. Furthermore, the reduction used in the proof directly generalizes to the following weighted version of the optimization criterion:
\begin{align*}
F(\map) =& w_1\cdot\sum_{\q\in J} [1-\expf(\map,\q)] \nonumber\\
&+ w_2\cdot\sum_{\q\in K_\cand-J}[\errf(\map,\q)] + w_3\cdot \sum_{\theta\in\map}\mathsf{size}(\theta)
\end{align*}
with positive integer weights $w_1, w_2, w_3$ and any size function that assigns equal size to the candidate mappings $\theta_i = R_i(X,Y) \rightarrow U(X,Y)$. More precisely, setting $m = \mathsf{size}(\theta_1) \cdot w_3 \cdot n$ in the proof above shows that this generalization is NP-hard as well.




\bibliographystyle{IEEEtran}
\bibliography{IEEEabrv,mapping}
%

\end{document}